\documentclass[sigconf]{acmart}
\settopmatter{authorsperrow=4}


\AtBeginDocument{%
  }


\setcopyright{acmlicensed}
\copyrightyear{2025}
\acmYear{2025}
\setcopyright{rightsretained}
\acmConference[CIKM '25]{Proceedings of the 34th ACM International
	Conference on Information and Knowledge Management}{November 10--14,
	2025}{Seoul, Republic of Korea}
\acmBooktitle{Proceedings of the 34th ACM International Conference on
	Information and Knowledge Management (CIKM '25), November 10--14, 2025,
	Seoul, Republic of Korea}\acmDOI{10.1145/3746252.3761067}
\acmISBN{979-8-4007-2040-6/2025/11}




\usepackage{hyperref}
\usepackage{graphicx}
\usepackage{textgreek}
\usepackage{algorithm}
\usepackage{tikz}
\usepackage{amsthm}
\usepackage{mdwlist}

\usepackage{xspace}
\usepackage{amsmath}
\usepackage{algorithmic}
\usepackage{mdwlist}
\usepackage{color}
\usepackage{csquotes}
\usepackage{booktabs}
\usepackage{multirow}
\usepackage{bbm}
\usepackage{bm}
\usepackage{xcolor}
\usepackage{threeparttable}
\usepackage{graphicx}
\usepackage{subcaption}
\usepackage{url}
\usepackage{enumitem}
\setlist{nosep}

\newcommand{\method}{ReVol\xspace}
\newcommand{\methodlong}{\underline{Re}turn-\underline{Vol}atility Normalization for Mitigating Distribution Shift in Stock Price Data\xspace}
\newcommand{\RVN}{$RVN$\xspace}
\newcommand{\RVE}{$RVE$\xspace}
\newcommand{\RVD}{$RVD$\xspace}


\newtheorem{problem}{Problem}


\begin{document}
\settopmatter{printacmref=false} 
\renewcommand\footnotetextcopyrightpermission[1]{}
	
\setlength{\floatsep}{0.01cm}
\setlength{\textfloatsep}{0.1cm}
\setlength{\intextsep}{0.01cm}
\setlength{\dblfloatsep}{0.01cm}
\setlength{\dbltextfloatsep}{0.01cm}
\setlength{\abovedisplayskip}{0.01cm}
\setlength{\belowdisplayskip}{0.01cm}
\setlength{\abovecaptionskip}{0.01cm}
\setlength{\belowcaptionskip}{0.01cm}

\title{Mitigating Distribution Shift in Stock Price Data via Return-Volatility Normalization for Accurate Prediction}

\author{Hyunwoo Lee}
\affiliation{%
	\institution{Seoul National University}
	\city{Seoul}
	\country{South Korea}
}
\email{jwarhw98@snu.ac.kr}
\author{Jihyeong Jeon}
\affiliation{%
	\institution{Seoul National University}
	\city{Seoul}
	\country{South Korea}
}
\email{jeonjihyeong@snu.ac.kr}
\author{Jaemin Hong}
\affiliation{%
	\institution{Seoul National University}
	\city{Seoul}
	\country{South Korea}
}
\email{jmhong0120@snu.ac.kr}
\author{U Kang}
\affiliation{%
	\institution{Seoul National University}
	\city{Seoul}
	\country{South Korea}
}
\email{ukang@snu.ac.kr}

%
%
%

%
%


\begin{abstract}
How can we address distribution shifts in stock price data to improve stock price prediction accuracy?
Stock price prediction has attracted attention from both academia and industry, driven by its potential to uncover complex market patterns and enhance decision-making. However, existing methods often fail to handle distribution shifts effectively, focusing on scaling or representation adaptation without fully addressing distributional discrepancies and shape misalignments between training and test data.

We propose \method(\methodlong), a robust method for stock price prediction that explicitly addresses the distribution shift problem.
\method leverages three key strategies to mitigate these shifts: (1) normalizing price features to remove sample-specific characteristics, including return, volatility, and price scale, (2) employing an attention-based module to estimate these characteristics accurately, thereby reducing the influence of market anomalies, and (3) reintegrating the sample characteristics into the predictive process, restoring the traits lost during normalization.
Additionally, \method combines geometric Brownian motion for long-term trend modeling with neural networks for short-term pattern recognition, unifying their complementary strengths.
Extensive experiments on real-world datasets demonstrate that \method enhances the performance of the state-of-the-art backbone models in most cases, achieving an average improvement of more than 0.03 in IC and over 0.7 in SR across various settings. 
\end{abstract}

\maketitle

\section{Introduction}


\begin{figure}[t]
	\centering
    \includegraphics[width=0.48\textwidth]{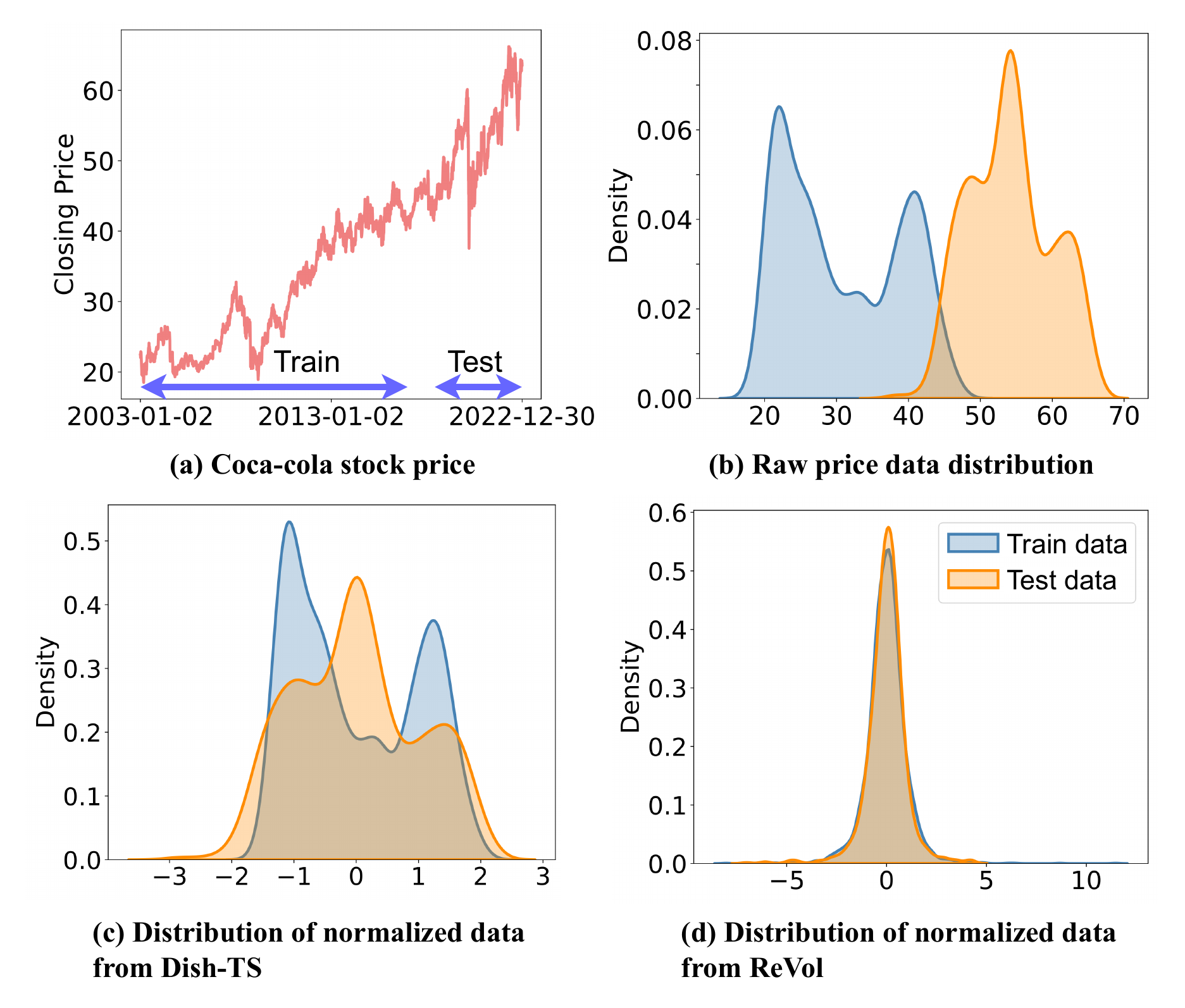}
	\caption{(a) shows a stock price dataset, including both training and test periods.
    (b) depicts distribution difference between raw price data of the training and test periods.
    (c) Dish-TS~\cite{fan2023dish} reduces the distribution difference; however, it does not fully align the distributions.
    (d) \method effectively aligns the two distributions with minimal discrepancy.}
    \label{figure:dist}
\end{figure}

How can we address distribution shifts in stock price data to improve predictive accuracy?
Distribution shifts --- distribution discrepancies across samples --- pose a fundamental challenge in data mining and machine learning. This issue is especially critical in stock price prediction, where market conditions evolve continually due to economic factors, geopolitical events, and investor sentiment.
These shifts occur not only over time as market conditions change but also across different stocks, each exhibiting unique statistical properties such as return and volatility.
Successfully overcoming these shifts is crucial for constructing robust and trustworthy predictive models, which in turn support more reliable investment strategies and enhance financial outcomes.

Despite extensive work on stock price prediction, existing methods~\cite{feng2019enhancing,li2021modeling,yoo2021accurate} often fail to adequately deal with distribution shifts. They do not explicitly mitigate distributional differences across samples, limiting their generalization in non-stationary financial time series. Methods for addressing distribution shifts~\cite{kim2021reversible,fan2023dish} unify distribution parameters across samples, while failing to align distribution shapes. Figure~\ref{figure:dist} illustrates how raw price data and normalized data distributions differ between training and test periods. Others~\cite{lin2021learning,zhan2024meta,zhao2023doubleadapt} rely on adaptation to the representation space, increasing the complexity of the model without directly resolving the distribution shift. These limitations hinder the robustness and accuracy of the prediction under changing market conditions.

We propose \method (\methodlong), a method for robust stock price prediction that explicitly addresses distribution shifts by normalizing individual stock characteristics. The main ideas of \method are as follows. First, \method mitigates distribution shifts by removing sample-specific attributes, such as return, volatility, and price scale, from the input data. Second, it accurately estimates these characteristics through an attention-based weighted averaging technique that alleviates the impact of transient market anomalies.
Third, \method reintegrates these characteristics into a prediction pipeline to recover the information removed during normalization.

Our contributions are summarized as follows:
\begin{itemize*}
    \item \textbf{Addressing distribution shifts in stock price prediction.} We propose \method, a novel approach to address distribution shifts by normalizing individual sample characteristics, including return, volatility, and price scale.
    Figure~\ref{figure:dist} depicts that the distribution gap between training and test data is mitigated by \method.
    \item \textbf{Integration of short-term patterns and long-term trends.} \method unifies geometric Brownian motion (GBM) for modeling long-term price behavior with a neural network for capturing nonlinear short-term patterns, thereby leveraging the complementary strengths of both methods.
    \item \textbf{Experiments.} We demonstrate that \method improves the performance of the state-of-the-art backbone models in most scenarios, delivering an average gain of more than 0.03 in IC and over 0.7 in SR. Moreover, \method effectively narrows the distribution gap between training and test data via the proposed GBM-based normalization.
\end{itemize*}

\begin{table}
	\centering
	\caption{Symbols and descriptions.}
	\small
	\begin{tabular}{@{}ll@{}}
		\toprule
		\textbf{Symbol} & \textbf{Definition} \\ \midrule
		$S_t$           & Stock price at time $t$ \\
		$B_t$           & Value of Brownian motion at time $t$ \\
		$\mu$           & Return of a stock \\
		$\sigma$        & Volatility of a stock \\
		$S_t^c, S_t^o, S_t^h, S_t^l$ & Closing, opening, the highest, and the lowest prices at time \\
		& step $t$ \\
		$\epsilon_t^c, \epsilon_t^o, \epsilon_t^h, \epsilon_t^l$ & Error terms of closing, opening, the highest, and the lowest \\
		& prices at time step $t$ \\
		$T$				& The current time step \\
		$r$             & Time interval between opening and previous day's \\
		& closing prices \\
		$u$             & Time interval between the highest and previous day's \\
		& closing prices \\
		$v$             & Time interval between the lowest and previous day's \\
		& closing prices \\
		$\mathbf{x}_t$	& Price feature vector at time step $t$ \\
		$\tilde{\mathbf{x}}_t$	& Feature vector at time step $t$ from fully connected layer \\
		$\mathbf{h}_t$	& Hidden state vector at time step $t$ from LSTM \\
		$\alpha_t$		& Attention weight at time step $t$ \\
		$\theta$        & Parameters of the neural network backbone \\
		$f_\theta$      & Neural network backbone \\
		$\phi$   & Parameters of the return-volatility estimator module \\
		$\RVE_\phi$      & Return-volatility estimator module \\
		$\RVN$   & Return-volatility normalization module \\
		$w$             & The window size \\
		$\beta$			& Hyperparameter to control the weight of guidance loss \\ \bottomrule
	\end{tabular}
	\label{tab:symbols}
\end{table}

The definitions of the symbols in this paper are listed in Table~\ref{tab:symbols}.
The code and datasets are available at \url{https://github.com/AnonymousCoder2357/ReVol}. 
\section{Related Works} \label{related}

\begin{figure*}[ht!]
	\centering
	\includegraphics[width=0.85\textwidth]{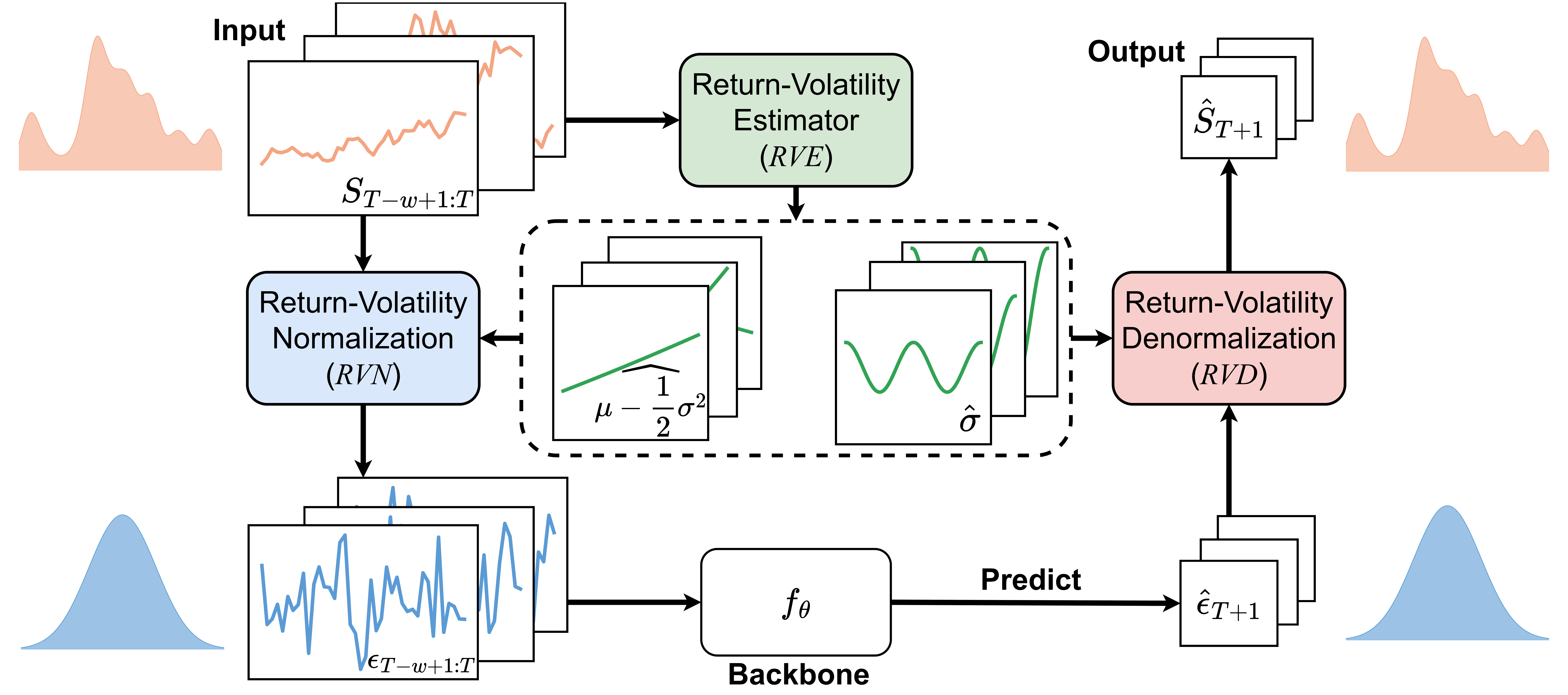}
	\caption{An overview of \method.}
	\label{fig:overview}
\end{figure*}

\subsection{Geometric Brownian Motion} \label{sec:related:gbm}
Geometric Brownian motion (GBM) \cite{bjork2019arbitrage} is one of the most popular models for stock price dynamics. GBM is widely used in various financial tasks, such as option pricing and portfolio optimization, due to its suitability and simplicity.
The GBM model assumes that the logarithm of stock prices follows a Brownian motion with drift, allowing for continuous price movements. In GBM, the stock price $S_t$ at time $t$ is modeled as:

\begin{equation} \label{gbm}
S_t = S_0 e^{\left( \mu - \frac{1}{2} \sigma^2 \right)t + \sigma B_t}
\end{equation}

\noindent where $\mu$, $\sigma$, and $B_t$ are the return, volatility of the stock, and the value of Brownian motion at time $t$, respectively. Brownian motion (BM) $B$ is a stochastic process with the following properties:

\begin{enumerate}
    \item $B_0=0$
    \item \label{p2} if $0 \leq s_1 \leq t_1 \leq s_2 \leq t_2$, then $B_{t_1}-B_{s_1}$ and $B_{t_2}-B_{s_2}$ are independent
    \item \label{p3} $B_t - B_s \sim \mathcal{N}(0, t - s) \quad \text{(for } 0 \leq s \leq t\text{)}$
    \item $t \to B_t$ is almost surely continuous
\end{enumerate}

\noindent where $B_t$ is the value of BM at time $t$.

For discrete time $t$, $S_t$ is expressed in terms of $S_{t-1}$ by representing the BM part as $\epsilon_t \sim \mathcal{N}(0, 1)$ with the following Lemma.

\begin{lemma} \label{lemma:discrete_gbm}
	When time is discretized, the stock price $S_t$ at time $t\in\mathbb{N}$ is  expressed as follows:
	\begin{equation} \label{discrete_gbm}
	S_{t} = S_{t-1} e^{\left( \mu - \frac{1}{2} \sigma^2 \right) + \sigma \epsilon_{t}}
	\end{equation}
	\noindent where the error term $\epsilon_{t} \sim \mathcal{N}(0, 1)$. For any $s, t \in \mathbb{N}$ with $s < t$, the error terms $\epsilon_s$ and $\epsilon_t$ corresponding to timepoints $s$ and $t$, respectively, are independent.
\end{lemma}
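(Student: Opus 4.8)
The plan is to derive the one-step recursion directly from the closed-form continuous solution in Equation~\eqref{gbm}, and then read off the distribution and the independence of the error terms from the defining properties of Brownian motion. First I would instantiate Equation~\eqref{gbm} at the two consecutive integer times $t$ and $t-1$ and form their ratio, so that the common factor $S_0$ cancels and the exponents subtract:
\begin{equation*}
\frac{S_t}{S_{t-1}}
= \frac{S_0\, e^{\left(\mu - \frac{1}{2}\sigma^2\right)t + \sigma B_t}}
       {S_0\, e^{\left(\mu - \frac{1}{2}\sigma^2\right)(t-1) + \sigma B_{t-1}}}
= e^{\left(\mu - \frac{1}{2}\sigma^2\right) + \sigma\left(B_t - B_{t-1}\right)}.
\end{equation*}
Rearranging gives $S_t = S_{t-1}\, e^{(\mu - \frac{1}{2}\sigma^2) + \sigma(B_t - B_{t-1})}$, which matches Equation~\eqref{discrete_gbm} once I set $\epsilon_t := B_t - B_{t-1}$.

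It then remains to verify the two claimed properties of $\epsilon_t$. For the marginal distribution, I would apply property~\ref{p3} with $s = t-1$: the increment $B_t - B_{t-1}$ is normally distributed with variance $t - (t-1) = 1$, hence $\epsilon_t \sim \mathcal{N}(0,1)$ as required. For independence, fix integers $s < t$ and write $\epsilon_s = B_s - B_{s-1}$ and $\epsilon_t = B_t - B_{t-1}$. Because $s$ and $t$ are integers with $s < t$, we have $s \le t-1$, and therefore the four endpoints satisfy $0 \le s-1 \le s \le t-1 \le t$. This is exactly the ordering required by property~\ref{p2} (taking $s_1 = s-1$, $t_1 = s$, $s_2 = t-1$, $t_2 = t$), so the two increments $\epsilon_s = B_{t_1} - B_{s_1}$ and $\epsilon_t = B_{t_2} - B_{s_2}$ are independent.

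The argument is essentially a direct application of the stated properties, so I do not expect a genuine obstacle; the only point demanding care is the independence claim at the boundary case $s = t-1$, where the two unit intervals $[s-1,s]$ and $[t-1,t]$ abut at the shared endpoint $s = t-1$. I would emphasize that property~\ref{p2} permits $t_1 = s_2$ (the inequality chain is non-strict, $t_1 \le s_2$), so adjacent increments over disjoint open intervals remain independent and no gap between the intervals is needed. With that edge case handled, the discretized recursion and both distributional properties follow, completing the proof.
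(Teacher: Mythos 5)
Your proof is correct and follows essentially the same route as the paper's: divide Equation~\eqref{gbm} at times $t$ and $t-1$, define $\epsilon_t = B_t - B_{t-1}$, and invoke properties~\ref{p2} and~\ref{p3} of Brownian motion. Your explicit verification of the ordering $0 \le s-1 \le s \le t-1 \le t$ and the abutting-interval edge case $s = t-1$ is a careful elaboration of what the paper leaves implicit, but it is the same argument.
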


\begin{proof}
	By Equation~\eqref{gbm}, the stock price $S_{t-1}$ at time $t-1$ is given as follows:
	\begin{equation} \label{gbm_t_1}
	S_{t-1} = S_0 e^{\left( \mu - \frac{1}{2} \sigma^2 \right)(t-1) + \sigma B_{t-1}}.
	\end{equation}
	\noindent By dividing Equation~\eqref{gbm} by Equation~\eqref{gbm_t_1}, we obtain the following:
	\begin{equation} \label{gbm_t/t1}
	\frac{S_t}{S_{t-1}}=e^{\left(\mu-\frac{1}{2}\sigma^2\right)+\sigma\left(B_t-B_{t-1}\right)}.
	\end{equation}
	\noindent Let $\epsilon_t=B_t-B_{t-1}$. Then, $\epsilon_t$ and $\epsilon_s$ are independent for $s<t$ and $s\in\mathbb{N}$, and $\epsilon_{t} \sim \mathcal{N}(0, 1)$ due to properties \ref{p2} and \ref{p3} of BM, respectively.
	By replacing $B_t-B_{t-1}$ with $\epsilon_t$ in Equation~\eqref{gbm_t/t1}, we obtain the result.
\end{proof}

\noindent The assumption that $\epsilon_t$ and $\epsilon_s$ are independent for $s<t$ and $t,s\in\mathbb{N}$ means that future stock prices cannot be predicted from past prices.
In contrast, we relax this assumption to allow learning from historical dependencies.

\subsection{Deep Learning-Based Stock Price Prediction}
With the advancement of deep neural networks, attempts to predict stock prices using the time series processing capabilities of recurrent neural networks have emerged \cite{nelson2017stock, sunny2020deep}.
The attention mechanism has been used to enhance the performance of RNN-based models \cite{10.5555/3172077.3172254, li2018stock, feng2019enhancing, yoo2021attention, soun2022accurate, kim2024accurate}. GCN or Transformer architectures have been used to leverage the correlations between stocks \cite{li2021modeling, yoo2021accurate, li2024master}.
However, these methods struggle to address the distribution shift problem since they fail to alleviate distributional discrepancy between training and test data.
Instead, we focus explicitly on mitigating distribution shift rather than solely learning representations.

\subsection{Methods for Addressing Distribution Shift}
As distribution shift has emerged as a critical issue in time series analysis, various methods have been proposed to address it in the time series and financial domains.
RevIN~\cite{kim2021reversible} and Dish-TS~\cite{fan2023dish} equalize the distribution parameters, mean and standard deviation, for each sample through instance normalization.
However, financial time series data do not generally follow a normal distribution.
As a result, even if the mean and standard deviation are equalized, the distribution shapes still differ across samples due to higher-order characteristics such as skewness and kurtosis.
There have been also methods to detect distribution shifts that may occur during the evolution of stock time series, and adapt to various market conditions \cite{lin2021learning, zhao2023doubleadapt, zhan2024meta}.
However, these methods rely on the representation space to adapt to the distribution on its own, which not only increases the burden on the model but also fails to explicitly address distribution shift.
In this work, we aim to resolve the distribution shift problem by making the distribution of each sample identical to improve the accuracy of stock price prediction. 
\section{Proposed Method} \label{method}

We propose \method (\methodlong), an accurate method for stock price prediction by mitigating the distribution shift problem.

\subsection{Overview}
We give a formal definition of the stock price prediction problem as follows:

\begin{problem}[Stock Price Prediction.]
	\textbf{Given} the historical prices $S_{T-w+1:T}^o$, $S_{T-w+1:T}^h$, $S_{T-w+1:T}^l$, and $S_{T-w+1:T}^c$ from day $T-w+1$ up to $T$,  where $S_t^o$, $S_t^h$, $S_t^l$, and $S_t^c$ are opening, the highest, the lowest,and closing prices on day $t$, respectively,
	\textbf{find} the closing price $S_{T+1}^c$ for the next day.
\end{problem}

To achieve accurate prediction mitigating distribution shift, \method is designed to address the following challenges.

\begin{enumerate*}
	\item \textbf{Distribution shift in stock price data.} The distribution of stock price data varies across stocks and over time.
	This distribution shift problem prevents accurate stock price prediction.
	However, existing methods focus only on matching the mean and standard deviation across samples, leaving the distribution shape unaligned.
	How can we reduce the discrepancies between distributions?
	\item \textbf{Estimating sample characteristics.} We must accurately estimate and remove individual sample characteristics, from stock price data to reduce distribution discrepancies.
	However, due to external shocks, stock prices sometimes deviate temporarily from their true distribution.
	Such anomalies hinder the estimation of the true distribution when distribution parameters are naively estimated using arithmetic means.
	How can we estimate the true distribution accurately?
	\item \textbf{Reintegrating sample characteristics.} Removing sample characteristics from the input data interrupts accurate prediction since the model cannot properly reconstruct the data distribution.
	How can we reintegrate the individual sample characteristics into the prediction?
\end{enumerate*}

Our proposed \method addresses the challenges with the following ideas.
Figure~\ref{fig:overview} depicts the overall process of \method.

\begin{enumerate*}
	\item \textbf{Return-volatility normalization (Section \ref{rvn}).} Inspired by GBM, we normalize the input stock price data with sample characteristics (return, volatility, and price scale), to remove them from the data.
	This effectively reduces distributional discrepancies across samples.
	As a result, we obtain historical error terms, which serve as inputs to the neural network backbone.
	\item \textbf{Return-volatility estimator (Section \ref{rve}).} We calculate return and volatility using a weighted average with attention weights.
	This approach mitigates the impact of anomalies that disrupt characteristic estimation.
	Consequently, we acquire precisely estimated sample characteristics.
	\item \textbf{Backbone and Return-volatility denormalization (Section \ref{rvd}).} We predict the future error term with the neural network backbone and denormalize it with sample characteristics to obtain the final prediction.
	By doing so, we restore the sample characteristics lost during the normalization process.
\end{enumerate*}

Each of the above idea corresponds to each main module of \method.
First, \RVN (Return-Volatility Normalization) module normalizes historical stock prices with sample characteristics to generate historical error terms, which act as inputs for the neural network backbone.
The backbone model does not suffer from distribution shift since the error terms follow an identical distribution across samples.
Second, \RVE (Return-Volatility Estimator) module accurately estimates sample characteristics through an attention mechanism, which are subsequently used in \RVN and \RVD (Return-Volatility Denormalization).
Lastly, the backbone predicts the future error term, and \RVD module denormalizes it to reintegrate information removed by \RVN.
The denormalized output provides the final future price prediction.

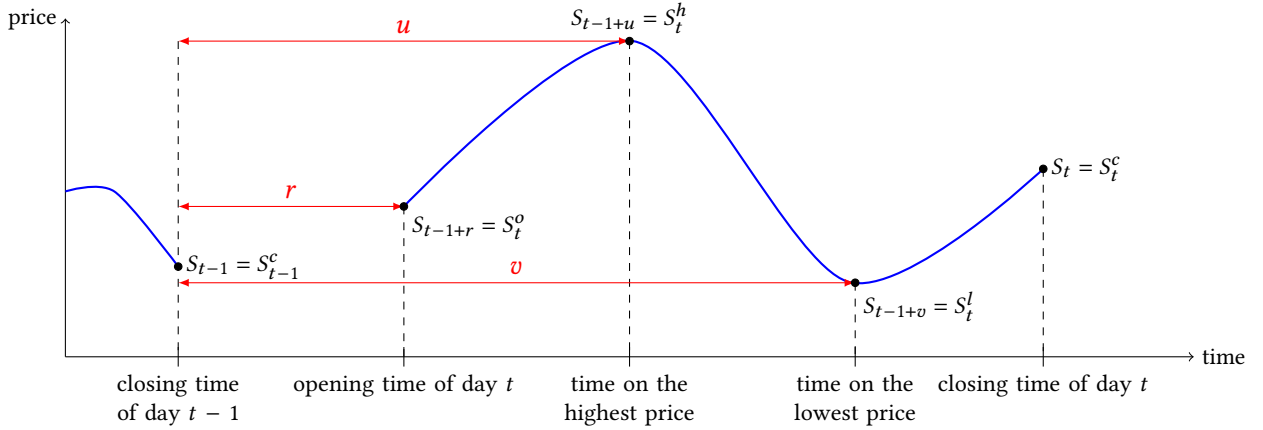
\begin{figure*}
	\begin{tikzpicture}
	
	\draw[->] (0, 0) -- (15, 0) node[anchor=west] {time};
	\draw[->] (0, 0) -- (0, 4.5) node[anchor=east] {price};
	
	\draw[thick, smooth,blue]
	plot coordinates {(0, 2.2) (0.65, 2.2)  (1.5,1.2)};
	
	\draw[thick, smooth,blue]
	plot coordinates {(4.5, 2) (7.55, 4.2) (10.4, 1) (13,2.5)};
	
	\draw[dashed] (1.5,4.2) -- (1.5, 0);
	\draw (1.5,0.13) -- (1.5,-0.13) node[anchor=north, text width=3cm, align=center] {closing time of day $t-1$};
	\draw[dashed] (4.5, 2) -- (4.5, 0);
	\draw (4.5,0.13) -- (4.5,-0.13) node[anchor=north, text width=3cm, align=center] {opening time of day $t$};
	\draw[dashed] (7.5, 4.2) -- (7.5, 0);
	\draw (7.5,0.13) -- (7.5,-0.13) node[anchor=north, text width=3cm, align=center] {time on the highest price};
	\draw[dashed] (10.5, 0.985) -- (10.5, 0);
	\draw (10.5,0.13) -- (10.5,-0.13) node[anchor=north, text width=3cm, align=center] {time on the lowest price};
	\draw[dashed] (13,2.5) -- (13, 0);
	\draw (13,0.13) -- (13,-0.13) node[anchor=north, text width=3cm, align=center] {closing time of day $t$};
	
	\draw[<->, >=latex,red] (4.5, 2) -- (1.5, 2) node[midway, above,red] {\Large $r$};
	\draw[<->, >=latex, red] (7.5, 4.2) -- (1.5, 4.2) node[midway, above,red] {\Large $u$};
	\draw[<->, >=latex, red] (10.5, 0.985) -- (1.5, 0.985) node[midway, above,red] {\Large $v$};
	
	\draw[fill] (1.5,1.2) circle [radius=0.05] node[anchor=west] {$S_{t-1}=S^c_{t-1}$}; 
	\draw[fill] (4.5, 2) circle [radius=0.05] node[anchor=north west] {$S_{t-1+r}=S^o_t$}; 
	\draw[fill] (7.5, 4.2) circle [radius=0.05] node[anchor=south] {$S_{t-1+u}=S^h_t$}; 
	\draw[fill] (10.5, 0.985) circle [radius=0.05] node[anchor=north west] {$S_{t-1+v}=S^l_t$}; 
	\draw[fill] (13,2.5) circle [radius=0.05] node[anchor=west] {$S_t=S^c_t$}; 
	
	\end{tikzpicture}
	\caption{Opening, the highest, the lowest, and closing prices on a continuous time axis. 
	}
	\label{fig:price_graph}
\end{figure*}

\subsection{Return-Volatility Normalization}
\label{rvn}
We propose \RVN (Return-Volatility Normalization) module to reduce the distributional discrepancy, inspired by GBM~\cite{bjork2019arbitrage}.
\RVN normalizes historical stock prices $S_{T-w+1:T}^o$, $S_{T-w+1:T}^h$, $S_{T-w+1:T}^l$, and $S_{T-w+1:T}^c$ for $w$ days up to day $T$ into normalized inputs $\epsilon_{T-w+1:T}^o$, $\epsilon_{T-w+1:T}^h$, $\epsilon_{T-w+1:T}^l$, and $\epsilon_{T-w+1:T}^c$, which serve as inputs to the backbone model.
Note that $\epsilon_{T-w+1:T}^o$, $\epsilon_{T-w+1:T}^h$, $\epsilon_{T-w+1:T}^l$, and $\epsilon_{T-w+1:T}^c$ are not random variables as in Equation~\eqref{discrete_gbm}, but rather samples in this section.
The backbone model is unaffected by distribution shift since the error terms share an identical distribution across samples.

For example, consider a stock observed over two different time periods.
In the first period, the prices generally range between 100 and 120, and the daily returns have a mean of 0.02 and a standard deviation of 0.05.
In the second period, the prices decrease and range between 60 and 80, and the daily returns have a mean of -0.01 and a standard deviation of 0.12.
The raw daily returns in the first period, measured on a higher price scale (around 110), are tightly concentrated around 0.02 with low variability, while in the second period, the returns observed on a lower price scale (around 70) are more dispersed around -0.01 with higher variability.
If we train the model on the first period and test it on the second period without any normalization, the model encounters a significant distribution shift in the level and dispersion of daily returns, as well as in the its price scale.
After applying return-volatility normalization, the daily returns in both periods are transformed to have an identical distribution.
This process enables the backbone network to learn predictive patterns that remain stable even when the distribution of the same stock changes over time.

To normalize the stock prices with the sample characteristics, we must first define the sample characteristics.
In the widely used stock modeling method GBM, stock characteristics are defined as return, volatility, and price scale.
We assume that stock prices follow a GBM, except for the independence assumption (refer to property \ref{p2} of GBM in Section~\ref{sec:related:gbm}).
We relax this assumption since temporal dependencies exist due to market microstructure effects and autocorrelation in prices.
Observing Equation \eqref{discrete_gbm}, the equation consists of the variables $\mu$ (return), $\sigma$ (volatility), $S_{t-1}$ (price scale), and $\epsilon_{t}$ (error term).
Note that 1) $\mu$, $\sigma$, and $S_{t-1}$ represent sample characteristics, and 2) $\epsilon_{t}$ is sampled from a standard normal distribution.
Therefore, we eliminate distributional discrepancy by constructing an input with error terms.

We first briefly explain the process of estimating sample characteristics in Section~\ref{sec:est} and then describe normalization methods for closing, opening, and the highest/lowest prices in Sections~\ref{sec:close}, \ref{sec:open}, and \ref{sec:high}, respectively.

\subsubsection{Return and Volatility Estimation.} \label{sec:est}
We first estimate the sample characteristics—return and volatility—based on GBM, using proposed \RVE (Return-Volatility Estimator), which is an attention-based module designed to accurately estimate return and volatility (see Section~\ref{rve} for details).
The estimation is expressed as follows.

\begin{equation*}
\widehat{\mu-\frac{1}{2}\sigma^2}, \hat{\sigma} = RVE_\mathbb{\phi}\left(S_{T-w+1:T}^o,S_{T-w+1:T}^h,S_{T-w+1:T}^l,S_{T-w+1:T}^c\right) \\
\end{equation*}

\noindent where $\mathbb{\phi}$ denotes the parameters of the \RVE module.

\subsubsection{Normalizing Closing Price.} \label{sec:close}
We obtain the error term $\epsilon_t^c$ by normalizing the logarithmic daily return using the estimated return and volatility, as shown in the following Lemma.
\begin{lemma} \label{lemma:error_close}
	An error term $\epsilon_t^c$ which corresponds to the closing price at day $t$, is obtained as follows:
	\begin{equation} \label{eq:close_norm}
	\epsilon_t^c=\frac{1}{\hat{\sigma}}\left(\log\frac{S_{t}^c}{S_{t-1}^c}-\left(\widehat{\mu-\frac{1}{2}\sigma^2}\right)\right).
	\end{equation}
	Furthermore, \( \epsilon_t^c \) is a sample drawn from a standard normal distribution: $	\epsilon_t^c \overset{\text{sample}}{\sim} \mathcal{N}(0,1)$.
\end{lemma}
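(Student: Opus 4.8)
The plan is to obtain Equation~\eqref{eq:close_norm} by algebraically inverting the discrete GBM relation of Lemma~\ref{lemma:discrete_gbm}, treating the successive closing prices $S_{t-1}^c$ and $S_t^c$ as consecutive observations of the underlying process $S_t$ sampled at the daily closing times. Concretely, I would first specialize Lemma~\ref{lemma:discrete_gbm} by identifying $S_t$ with $S_t^c$ and $S_{t-1}$ with $S_{t-1}^c$, so that Equation~\eqref{discrete_gbm} reads
\[
S_t^c = S_{t-1}^c \, e^{\left(\mu - \frac{1}{2}\sigma^2\right) + \sigma \epsilon_t^c},
\]
where $\epsilon_t^c$ is the error term attached to the closing price on day $t$.

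Next I would divide both sides by $S_{t-1}^c$ and take the natural logarithm, which linearizes the exponential and gives $\log\frac{S_t^c}{S_{t-1}^c} = \left(\mu - \frac{1}{2}\sigma^2\right) + \sigma \epsilon_t^c$. Solving this scalar equation for $\epsilon_t^c$ isolates the error term as $\frac{1}{\sigma}\bigl(\log\frac{S_t^c}{S_{t-1}^c} - (\mu - \frac{1}{2}\sigma^2)\bigr)$. Since the true parameters $\sigma$ and $\mu - \frac{1}{2}\sigma^2$ are unobservable, I would then substitute the estimates $\hat{\sigma}$ and $\widehat{\mu - \frac{1}{2}\sigma^2}$ supplied by the \RVE module (Section~\ref{sec:est}), yielding exactly the claimed Equation~\eqref{eq:close_norm}.

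For the distributional ``furthermore'' clause, I would appeal once more to Lemma~\ref{lemma:discrete_gbm}: under the (relaxed) GBM model the error term in Equation~\eqref{discrete_gbm} is by construction marginally standard normal, and the relaxation of the independence property (property~\ref{p2}) affects only the joint dependence across time steps, not the marginal law of each $\epsilon_t^c$. Hence the value produced by Equation~\eqref{eq:close_norm} on a particular observed price path is a realization of that standard-normal error term, which justifies writing $\epsilon_t^c \overset{\text{sample}}{\sim} \mathcal{N}(0,1)$.

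The algebra is routine; the only point demanding care is conceptual rather than computational, namely the distinction between the \emph{random variable} $\epsilon_t$ of the generative model and the \emph{deterministic sample} $\epsilon_t^c$ obtained by plugging observed prices and estimated parameters into Equation~\eqref{eq:close_norm}. The distributional statement holds exactly when the estimates coincide with the governing parameters, so the clean claim rests on interpreting $\hat{\sigma}$ and $\widehat{\mu - \frac{1}{2}\sigma^2}$ as the parameters defining the sample's distribution. I expect this reconciliation of the sample-versus-variable viewpoints to be the main subtlety, while the derivation itself is a one-line inversion of the log-return.
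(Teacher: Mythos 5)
Your proposal is correct and follows essentially the same route as the paper's proof: specialize the discrete GBM relation of Lemma~\ref{lemma:discrete_gbm} to closing prices, take logs and invert to isolate $\epsilon_t^c=\frac{1}{\sigma}\bigl(\log\frac{S_t^c}{S_{t-1}^c}-(\mu-\frac{1}{2}\sigma^2)\bigr)$, substitute the \RVE estimates, and invoke Lemma~\ref{lemma:discrete_gbm} for the $\mathcal{N}(0,1)$ claim. Your closing remark that the distributional statement holds exactly only when $\hat{\sigma}$ and $\widehat{\mu-\frac{1}{2}\sigma^2}$ coincide with the true parameters is a caveat the paper leaves implicit, but it does not change the argument.
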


\begin{proof}
	By Equation~\eqref{discrete_gbm}, the closing price $S_{t}$ on day $t$ is given as follows:
	\begin{equation} \label{gbm_closing}
	S_{t}^c = S_{t-1}^c e^{\left( \mu - \frac{1}{2} \sigma^2 \right) + \sigma \epsilon_{t}^c}.
	\end{equation}
	\noindent Note that $\epsilon_{t}^c \overset{\text{sample}}{\sim} \mathcal{N}(0,1)$ by Lemma~\ref{lemma:discrete_gbm}.
	We rearrange Equation~\eqref{gbm_closing} as follows:
	\begin{equation*}
	\epsilon_t^c=\frac{1}{\sigma}\left(\log\frac{S_{t}^c}{S_{t-1}^c}-\left(\mu-\frac{1}{2}\sigma^2\right)\right).
	\end{equation*}
	\noindent We get the result by replacing $\mu-\frac{1}{2}\sigma^2$ and $\sigma$ with their estimations.
\end{proof}

The error term $\epsilon_t^c$ represents the error of the closing price on day $t$ relative to the closing price on day $t-1$.
Thus, we exploit $\epsilon_t^c$ as a feature for training in place of $S_t^c$, the closing price on day $t$.

\subsubsection{Normalizing Opening Price.} \label{sec:open}

Capturing the relationship between the previous day’s closing price and today’s opening price is essential since overnight information—e.g., disclosures, news, and foreign market movements after the close—is immediately reflected in today’s opening price.
A simple comparison of the opening–opening relationship fails to account for this overnight drift.
Therefore, modeling the closing–opening relationship first is critical to improve daily prediction performance.

A simple way to normalize the opening price is to apply Equation~\eqref{eq:close_norm} in the same way, as follows:

\begin{equation*}
\epsilon_t^o=\frac{1}{\hat{\sigma}}\left(\log\frac{S_{t}^o}{S_{t-1}^o}-\left(\widehat{\mu-\frac{1}{2}\sigma^2}\right)\right).
\end{equation*}
\noindent However, in this case, the error term $\epsilon_t^o$ represents the change from the opening price on day $t-1$ to the opening price on day $t$.
Then, the relationship between the opening and closing prices cannot be captured.
Therefore, we propose to map the opening and closing time onto a continuous time axis $[0,\infty)$, and associate each time with the corresponding price.
Then, we calculate error terms that capture the relationship between them.

Let $S_t$ represent the stock price at time $t$ on a continuous time axis $[0,\infty)$.
In this axis, $t\in\mathbb{N}$ represents the closing time of day $t$ where we regard $t$ as an integer.
That is, $S_t=S_t^c$ for $t\in\mathbb{N}$.
We associate the opening, highest, and lowest prices on day $t$ with their respective time points on this time axis.
Let $0\leq r,u,v\leq1$ represent the time intervals between the closing price on day $t-1$ and the opening, highest, and lowest prices on day $t$, respectively, as shown in Figure~\ref{fig:price_graph}.
Then, on the continuous time axis, $S_{t-1+r}=S_t^o$, $S_{t-1+u}=S_t^h$, and $S_{t-1+v}=S_t^l$.

By Equation~\eqref{gbm}, the opening price $S_{t-1+r}$ is expressed in terms of the previous day's closing price $S_{t-1}$:

\begin{equation} \label{open_gbm}
S_{t-1+r} = S_{t-1} e^{\left( \mu - \frac{1}{2} \sigma^2 \right)r + \sigma \epsilon_t^{o}}
\end{equation}

\noindent where $\epsilon_t^{o}\overset{\text{sample}}{\sim}\mathcal{N}(0, r)$ is the error term sampled from $B_{t-1+r} - B_{t-1}$.
$\epsilon_t^{o}$ is determined if $r$ is known. 
However, determining the time interval $r$ between the opening and previous day's closing prices is not straightforward.
We might consider setting $r$ to 0 since the opening price is the immediate price following the previous day's closing price.
However, the opening price differs from the previous day's closing price due to price fluctuations during after-hours trading, meaning that $r$ cannot be set to 0.
Therefore, we need to estimate $r$ before normalizing the opening price.
We estimate $r$ based on the assumption that stock prices follow a geometric modeling with respect to time.

\begin{lemma} \label{lemma:estimate_r}
	By minimizing a mean squared error, $r$ is estimated as follows:
	
	\begin{equation} \label{eq:est_r}
	\hat{r}=\frac{\mathbb{E}\left[\log{\frac{S_{t}^c}{S_{t-1}^c}}\log{\frac{S_{t}^o}{S_{t-1}^c}}\right]}{\mathbb{E}\left[\left(\log{\frac{S_{t}^c}{S_{t-1}^c}}\right)^2\right]}.
	\end{equation}
	
\end{lemma}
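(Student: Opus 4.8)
The plan is to treat the estimation of $r$ as a one-dimensional least-squares problem built on the geometric-in-time modeling assumption announced in the text. First I would make that assumption explicit. If the log-price is taken to grow at a constant rate across the trading day, then for a fractional time $s\in[0,1]$ we have $\log\frac{S_{t-1+s}}{S_{t-1}}=s\,\log\frac{S_t}{S_{t-1}}$. Evaluating this at the opening time $s=r$ and substituting $S_{t-1+r}=S_t^o$, $S_{t-1}=S_{t-1}^c$, and $S_t=S_t^c$ yields the target relation $\log\frac{S_t^o}{S_{t-1}^c}=r\,\log\frac{S_t^c}{S_{t-1}^c}$. This is the step that carries the real modeling content, and it is where the interpretation of $r$ as the fraction of the day that has elapsed by the open comes from.

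Since this relation holds only approximately on observed data, the second step is to define the residual $\log\frac{S_t^o}{S_{t-1}^c}-r\log\frac{S_t^c}{S_{t-1}^c}$ and the mean squared error $\mathcal{L}(r)=\mathbb{E}\big[(\log\frac{S_t^o}{S_{t-1}^c}-r\log\frac{S_t^c}{S_{t-1}^c})^2\big]$, with the expectation taken over the days in the window. The third step is routine calculus: differentiating with respect to the scalar $r$ gives $\mathcal{L}'(r)=-2\,\mathbb{E}[\log\frac{S_t^c}{S_{t-1}^c}(\log\frac{S_t^o}{S_{t-1}^c}-r\log\frac{S_t^c}{S_{t-1}^c})]$, and setting it to zero produces the normal equation $\mathbb{E}[\log\frac{S_t^c}{S_{t-1}^c}\log\frac{S_t^o}{S_{t-1}^c}]=r\,\mathbb{E}[(\log\frac{S_t^c}{S_{t-1}^c})^2]$, which rearranges into the ratio of Equation~\eqref{eq:est_r}. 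To close the argument I would note that $\mathcal{L}$ is quadratic in $r$ with $\mathcal{L}''(r)=2\,\mathbb{E}[(\log\frac{S_t^c}{S_{t-1}^c})^2]\ge 0$, strictly positive unless the full-day log-return is almost surely zero, so the stationary point is the unique global minimizer and $\hat r$ is well defined.

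I expect the main obstacle to be conceptual rather than computational: the least-squares algebra is a textbook one-variable projection, so the real work lies in justifying the relation $\log\frac{S_t^o}{S_{t-1}^c}=r\log\frac{S_t^c}{S_{t-1}^c}$ and arguing that this is the correct (intercept-free, linear-in-$r$) regression to minimize. As a consistency check I would substitute the GBM expressions from Equation~\eqref{open_gbm} and Equation~\eqref{discrete_gbm} into the ratio and take expectations; the overlap covariance $\mathbb{E}[\epsilon_t^c\epsilon_t^o]=r$ then supplies exactly the factor needed for the numerator to equal $r$ times the denominator, so $\hat r$ recovers the true $r$ in expectation under the model. This confirms that the chosen loss targets the intended quantity and is not merely an ad hoc normalization.
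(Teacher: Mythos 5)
Your proposal follows essentially the same route as the paper's proof: the deterministic geometric-in-time assumption yielding $\log\frac{S_t^o}{S_{t-1}^c}=r\log\frac{S_t^c}{S_{t-1}^c}$, followed by minimizing the same mean squared error via the first-order condition, with convexity ensuring the stationary point is the global minimizer. Your closing GBM consistency check (using the overlap covariance $\mathbb{E}[\epsilon_t^c\epsilon_t^o]=r$ to show $\hat r$ recovers the true $r$ in expectation) is correct and a worthwhile addition the paper does not include, but it does not change the core argument.
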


\begin{proof}
	According to the geometric modeling of stock prices, we express the opening price on day $t$ as follows:
	
	\begin{equation} \label{eq:open_close}
	S_{t-1+r}=S_{t-1}\left(\frac{S_{t}}{S_{t-1}}\right)^{r}.
	\end{equation}

	\noindent This expression is based on a simplifying assumption that the stock price follows a deterministic path: $S_t=S_0 e^{\kappa t}$ for some $\kappa\in\mathbb{R}$.
	While this assumption does not fully align with the GBM and is slightly inconsistent with Equations~\eqref{discrete_gbm} and \eqref{open_gbm},
	it offers a tractable approximation that allows for a closed-form estimation of $r$, which is useful in practice.
	
	\noindent By taking the log of Equation~\eqref{eq:open_close}, we rewrite it as follows:
	
	\begin{equation*}
	\log{\frac{S_{t-1+r}}{S_{t-1}}}=r\log{\frac{S_{t}}{S_{t-1}}}.
	\end{equation*}
	\noindent To find the best $r$ that satisfies the equation above, we minimize the following mean squared error:
	
	\begin{equation} \label{eq:mse}
	\mathbb{E}\left[\left(r\log{\frac{S_{t}}{S_{t-1}}}-\log{\frac{S_{t-1+r}}{S_{t-1}}}\right)^2\right].
	\end{equation}
	
	\noindent Equation~\eqref{eq:mse} attains its minimum at the point where the derivative with respect to $r$ is equal to zero since it is convex with respect to $r$:
	
	\begin{eqnarray}
	\quad \quad \quad 0&=&\frac{d}{dr}\mathbb{E}\left[\left(r\log{\frac{S_{t}}{S_{t-1}}}-\log{\frac{S_{t-1+r}}{S_{t-1}}}\right)^2\right] \nonumber \\
	&=&\mathbb{E}\left[2\log{\frac{S_{t}}{S_{t-1}}}\left(r\log{\frac{S_{t}}{S_{t-1}}}-\log{\frac{S_{t-1+r}}{S_{t-1}}}\right)\right] \nonumber \\
	\label{eq:diff} &=&\mathbb{E}\left[2\log{\frac{S_{t}^c}{S_{t-1}^c}}\left(r\log{\frac{S_{t}^c}{S_{t-1}^c}}-\log{\frac{S_{t}^o}{S_{t-1}^c}}\right)\right]
	\end{eqnarray}
	
	\noindent Solving Equation \eqref{eq:diff}, we obtain the result.
\end{proof}

As a result, we obtain the error term $\epsilon_t^o$ with the following Lemma.

\begin{lemma} \label{lemma:error_open}
	Assume that Equation~\eqref{eq:est_r} accurately estimates $r$, i.e., $\hat{r}=r$.
	Then, the error term $\epsilon_t^o$  corresponding to the opening price on day $t$, is obtained as follows:
	\begin{equation} \label{eq:open_norm}
	\epsilon_t^{o}=\frac{1}{\hat{\sigma}}\left(\log\frac{S_{t}^o}{S_{t-1}^c}-\left(\widehat{\mu-\frac{1}{2}\sigma^2}\right)\hat{r}\right).
	\end{equation}
	Furthermore, \( \frac{1}{\sqrt{\hat{r}}}\epsilon_t^o \) is a sample drawn from a standard normal distribution:
	\begin{equation*}
	\frac{1}{\sqrt{\hat{r}}}\epsilon_t^o \overset{\text{sample}}{\sim} \mathcal{N}(0,1).
	\end{equation*}
\end{lemma}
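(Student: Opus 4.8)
The plan is to adapt the algebra of Lemma~\ref{lemma:error_close}, but working on the continuous time axis introduced above so that the opening price is tied to the \emph{previous} day's close rather than to the previous opening. First I would take Equation~\eqref{open_gbm}, which already expresses $S_{t-1+r}$ in terms of $S_{t-1}$, and substitute the identifications $S_{t-1+r}=S_t^o$ and $S_{t-1}=S_{t-1}^c$ established via Figure~\ref{fig:price_graph}. This turns Equation~\eqref{open_gbm} into a relation among $S_t^o$, $S_{t-1}^c$, and $\epsilon_t^o$.

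Next I would take the logarithm of both sides and isolate $\epsilon_t^o$, giving $\epsilon_t^o = \frac{1}{\sigma}\bigl(\log\frac{S_t^o}{S_{t-1}^c} - (\mu-\frac{1}{2}\sigma^2)r\bigr)$. Replacing the true parameters $\mu-\frac{1}{2}\sigma^2$, $\sigma$, and $r$ with their estimates $\widehat{\mu-\frac{1}{2}\sigma^2}$, $\hat\sigma$, and $\hat r$ yields exactly Equation~\eqref{eq:open_norm}. This first half is routine: it is the same rearrangement as in Lemma~\ref{lemma:error_close}, the only new ingredient being the drift factor $r$ multiplying $\mu-\frac{1}{2}\sigma^2$ (because the increment now spans a fraction $r$ of a day rather than a full day).

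For the distributional claim I would appeal to property~\ref{p3} of Brownian motion: since $\epsilon_t^o$ is sampled from $B_{t-1+r}-B_{t-1}$ and the time increment equals $(t-1+r)-(t-1)=r$, we have $\epsilon_t^o \overset{\text{sample}}{\sim}\mathcal{N}(0,r)$, as already noted after Equation~\eqref{open_gbm}. Standardizing a $\mathcal{N}(0,r)$ variable by dividing by $\sqrt{r}$ produces a standard normal, so $\frac{1}{\sqrt{r}}\epsilon_t^o \overset{\text{sample}}{\sim}\mathcal{N}(0,1)$; invoking the hypothesis $\hat r = r$ rewrites this as $\frac{1}{\sqrt{\hat r}}\epsilon_t^o \overset{\text{sample}}{\sim}\mathcal{N}(0,1)$, which is the asserted result.

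The main obstacle I anticipate is conceptual rather than computational: unlike the closing-price case of Lemma~\ref{lemma:error_close}, where the increment has unit variance and no rescaling is needed, here the variance is $r<1$, so the normalization must carry the extra $1/\sqrt{\hat r}$ factor to recover a standard normal. I would also take care to keep the ``true'' random error term distinct from the estimate-based sample $\epsilon_t^o$ of Equation~\eqref{eq:open_norm}: the distributional statement is justified under the lemma's accuracy hypothesis (stated as $\hat r = r$, with $\hat\sigma$ and $\widehat{\mu-\frac{1}{2}\sigma^2}$ treated as exact), which is precisely what lets the computed sample inherit the $\mathcal{N}(0,r)$ law of the underlying Brownian increment.
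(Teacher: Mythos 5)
Your proposal is correct and follows essentially the same route as the paper's own proof: rearrange Equation~\eqref{open_gbm} to isolate $\epsilon_t^o$, invoke $\epsilon_t^{o}\overset{\text{sample}}{\sim}\mathcal{N}(0, r)$ from the Brownian increment to get the standardized claim, and then substitute the estimates $\widehat{\mu-\frac{1}{2}\sigma^2}$, $\hat\sigma$, and $\hat r$. Your explicit note on the identifications $S_{t-1+r}=S_t^o$, $S_{t-1}=S_{t-1}^c$ and on the $1/\sqrt{\hat r}$ rescaling merely makes explicit what the paper leaves implicit.
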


\begin{proof}
	Rewriting Equation~\eqref{open_gbm}, we obtain the following:
	\begin{equation*}
	\epsilon_t^o=\frac{1}{\sigma}\left(\log\frac{S_{t-1+r}}{S_{t-1}}-\left(\mu-\frac{1}{2}\sigma^2\right)r\right).
	\end{equation*}
	\noindent Since $\epsilon_t^{o}\overset{\text{sample}}{\sim}\mathcal{N}(0, r)$, $\frac{1}{\sqrt{\hat{r}}}\epsilon_t^o \overset{\text{sample}}{\sim} \mathcal{N}(0,1)$.
	We obtain the result by replacing $\mu-\frac{1}{2}\sigma^2$, $\sigma$, and $r$ with their estimations.
\end{proof}

\subsubsection{Normalizing the Highest and Lowest Prices.}\label{sec:high}
The highest price also cannot be normalized to the error term using the previous day's highest price, since it fails to capture the relationship with the closing price.
Furthermore, it is also impossible to normalize it in the same way as the opening price, as shown in Equation~\eqref{eq:open_norm}, since the time interval $u$ between the highest and previous day's closing prices is unknown.

To address this, we define $\epsilon_t^h$ and normalize the highest price as follows:

\begin{equation*}
\epsilon_t^h = \frac{1}{\hat{\sigma}} \log \frac{S_t^h}{S_{t-1}^c}.
\end{equation*}

\noindent This definition is justified as follows.
Since $S_t^h$ is the highest price on day $t$,

\begin{equation} \label{eq:high_gbm}
S_t^h = \max_{0<u\leq 1} S_{t-1}^c e^{\left(\mu - \frac{1}{2} \sigma^2\right)u + \sigma B_u}.
\end{equation}

\noindent We simplify Equation~\eqref{eq:high_gbm} by ignoring the drift term $\left(\mu - \frac{1}{2} \sigma^2\right)u$, as its contribution is negligible compared to the volatility term $\sigma B_u$ on a daily scale~\cite{andersen1998answering, andersen2001distribution}.
These small values indicate that the drift's effect is minor in intraday settings.
To empirically justify this simplification, we approximate the distribution of the ratio $\left|\frac{\left(\mu - \frac{1}{2} \sigma^2\right)u}{\sigma B_u}\right|$. The results show that this ratio is less than $\frac{1}{5}$ in 74.1\% of the cases in the U.S., 78.7\% in China, 75.3\% in the U.K., and 73.0\% in Korea.
Thus, we get

\begin{equation*}
S_t^h \approx \max_{0<u\leq 1} S_{t-1}^c e^{\sigma B_u} \quad \Rightarrow \quad \frac{1}{\sigma} \log \frac{S_t^h}{S_{t-1}^c} \approx \max_{0<u\leq 1} B_u.
\end{equation*}

\noindent Since $\max_{0<u\leq 1} B_u$ follows a distribution independent of specific samples, the resulting $\epsilon_t^h$ also follows a consistent distribution across different samples.

We also normalize the lowest price in the same way:

\begin{equation*}
\epsilon_t^l = \frac{1}{\hat{\sigma}} \log \frac{S_t^l}{S_{t-1}^c}.
\end{equation*}

\noindent Note that $\epsilon_t^l$ also follows a consistent distribution across different samples.

\subsection{Return-Volatility Estimator}
\label{rve}
\RVE (Return-Volatility Estimator) is designed to accurately estimate sample characteristics from historical stock prices.
Specifically, \RVE takes historical prices as input and estimates the distribution parameters: return and volatility.
Stock prices sometimes deviate temporarily from their true distribution due to external shocks.
For example, stock prices experience a temporary surge due to expectations around a new product launch, or a sharp decline due to institutional investors liquidating their positions for short-term gains.
These temporary fluctuations create noise in the data, making it difficult to accurately estimate the underlying distribution of stock prices.
We propose to estimate the distribution parameters exploiting a weighted average based on attention weights to handle such anomalies flexibly.

We observe that $\log\frac{S_{t}^c}{S_{t-1}^c}=\mu-\frac{1}{2}\sigma^2+\sigma\epsilon_t^c \sim \mathcal{N}(\mu-\frac{1}{2}\sigma^2,\sigma^2)$ by applying Equation~\eqref{discrete_gbm} to closing prices.
According to this observation, a simple approach to estimate $\mu-\frac{1}{2}\sigma^2$ and $\sigma$ is to compute the arithmetic mean as follows:

\begin{eqnarray}
\widehat{\mu-\frac{1}{2}\sigma^2}&=&\frac{1}{w}\sum_{t=T-w+1}^{T}\log\frac{S_{t}^c}{S_{t-1}^c} \label{eq:simple_return} \\
\hat{\sigma}&=&\sqrt{\frac{1}{w}\sum_{i=T-w+1}^{T}\left(\log\frac{S_{t}^c}{S_{t-1}^c}-\widehat{\mu-\frac{1}{2}\sigma^2}\right)^2}. \label{eq:simple_volatility}
\end{eqnarray}

However, this approach is significantly affected by anomalies that deviate from the true distribution, making it difficult to accurately estimate the distribution parameters.
Our main idea is to estimate $\mu-\frac{1}{2}\sigma^2$ and $\sigma$ using a weighted average based on attention weights as follows:

\begin{eqnarray}
\widehat{\mu-\frac{1}{2}\sigma^2}&=&\sum_{t=T-w+1}^{T}\alpha_t\log\frac{S_{t}^c}{S_{t-1}^c} \label{eq:return} \nonumber \\
\hat{\sigma}&=&\sqrt{\sum_{t=T-w+1}^{T}\alpha_t\left(\log\frac{S_{t}^c}{S_{t-1}^c}-\widehat{\mu-\frac{1}{2}\sigma^2}\right)^2} \nonumber \label{eq:volatility}
\end{eqnarray}

\noindent where $\alpha_t$ is the attention weight at time step $t$, calculated by Equation~\eqref{eq:attn_weight}. Note that $\sum_{t=T-w+1}^{T}\alpha_t=1$. This approach flexibly handles anomalies by dynamically adjusting the average weights.

\textbf{Attention module structure.} We describe the structure of the attention module to compute the attention weight $\alpha_t$.
We exploit attention LSTM \cite{feng2019enhancing} to generate query and key vectors that effectively capture sequential dependencies in time-series data.

We begin by transforming price feature vector $\mathbf{x}_t \in \mathbb{R}^D$ at each time step using a fully connected layer with a hyperbolic tangent activation function,  $\tilde{\mathbf{x}}_t = tanh(\mathbf{W} \mathbf{x}_t + \mathbf{b})$,
where the parameters $\mathbf{W} \in \mathbb{R}^{E \times D}$ and $\mathbf{b} \in \mathbb{R}^E$ are shared across all time steps $t$.
This transformation produces a new feature representation before the data are passed into the LSTM, improving the network's ability to capture temporal patterns \cite{feng2019enhancing}.

The feature vectors $\tilde{\mathbf{x}}_{T-w+1},\ldots,\tilde{\mathbf{x}}_{T}$ are then passed into the LSTM and transformed into hidden state vectors $\mathbf{h}_{T-w+1},\ldots,\mathbf{h}_{T}$.
The hidden state vector $\mathbf{h}_{T}$ from the last time step represents past information and serves as the query vector.
Finally, the attention weight $\alpha_t$ at time step $t$ is calculated as follows:

\begin{equation}
\alpha_t = \frac{e^{\mathbf{h}_t^\top \mathbf{h}_{T}}}{\sum_{i=T-w+1}^{T} e^{\mathbf{h}_i^\top \mathbf{h}_{T}}}. \label{eq:attn_weight}
\end{equation}

\subsection{Backbone and Return-Volatility Denormalization}
\label{rvd}

The neural network backbone $f_\mathbf{\theta}$ receives the normalized error terms of historical stock prices as input, specifically $\epsilon_{T-w+1:T}^o$, $\epsilon_{T-w+1:T}^h$, $\epsilon_{T-w+1:T}^l$, and $\epsilon_{T-w+1:T}^c$.
These error terms are generated by the \RVN module and share an identical distribution across samples, effectively mitigating distribution shift.
The backbone model outputs the predicted future error term $\hat{\epsilon}_{T+1}^c$, which does not contain any sample-specific characteristics such as return or volatility.

\begin{equation*}
\hat{\epsilon}_{T+1}^c=f_{\mathbf{\theta}}(\epsilon_{T-w+1:T}^o, \epsilon_{T-w+1:T}^h, \epsilon_{T-w+1:T}^l, \epsilon_{T-w+1:T}^c)
\end{equation*}

The backbone can be instantiated with any time series prediction model, such as LSTM, GRU, or Transformer variants, allowing \method to leverage their strengths while controlling for distribution shift.

With the predicted future error term $\hat{\epsilon}_{T+1}^c$,
the \RVD (Return-Volatility Denormalization) module denormalizes it to produce the final prediction $\hat{S}_{T+1}^c$.
It reintegrates the information of sample characteristics lost in the normalization process into the prediction of the backbone model.
\RVD works by replacing the BM part of the GBM with the predicted future error term.

\begin{equation*}
\hat{S}_{T+1}^c = S_{T}^c e^{\left( \widehat{\mu - \frac{1}{2} \sigma^2} \right) + \hat{\sigma} \hat{\epsilon}_{T+1}^c}.
\end{equation*}

Due to the independence assumption that future prices are independent of past prices, GBM cannot predict future prices.
However, the model gains forecasting ability by eliminating the independence assumption and replacing the BM part with the predicted future error term.
\method considers both learned patterns of the historical error terms and inherent price dynamics reflected in the sample characteristics.

\subsection{Loss Function}
The final loss function is designed to optimize the parameters of the backbone model $\mathbf{\theta}$ and the parameters of the \RVE module $\mathbf{\phi}$.
The loss is given by

\begin{equation*} \label{eq:loss}
\mathbb{E}\left[\left(\frac{\hat{S}_{T+1}^c}{S_{T}^c} - \frac{S_{T+1}^c}{S_{T}^c}\right)^2
+ \beta \left(\frac{1}{w} \sum_{t=T-w+1}^{T} \log \frac{S_{t}^c}{S_{t-1}^c} - \widehat{\mu - \frac{1}{2} \sigma^2}\right)^2 \right].
\end{equation*}

The loss is composed of two parts.
The first part is the MSE loss for daily return prediction.
The second part is a guidance loss to give robustness to \RVE.
We observe that \method with returns calculated through arithmetic averages (Equation~\eqref{eq:simple_return}) also leads to the performance improvement.
Therefore, we propose to guide $\widehat{\mu-\frac{1}{2}\sigma^2}$ to resemble the arithmetic mean of the logarithmic daily return through minimizing guidance loss.
These joint objectives enable end-to-end training of the backbone and the \RVE module.
$\beta$ is a hyperparameter to control the effect of the guidance loss.
Algorithm~\ref{algorithm:training} illustrates the whole training process.

\begin{algorithm}[t]
	\caption{Training Algorithm for \method}
	\label{algorithm:training}
	
	\begin{flushleft}
		\textbf{Input}: historical stock price dataset \\
		\textbf{Output}: optimal backbone parameters $\mathbf{\theta}$ and RVE module parameters $\mathbf{\phi}$
	\end{flushleft}
	
	\begin{algorithmic}[1]
		\STATE \textbf{for} each mini-batch \textbf{do}
		\STATE \quad \textbf{for} each sample \textbf{do}
		\STATE \quad \quad $\widehat{\mu-\frac{1}{2}\sigma^2}, \hat{\sigma} \leftarrow RVE_\mathbb{\phi}\big(S_{T-w+1:T}^o,S_{T-w+1:T}^h,S_{T-w+1:T}^l,S_{T-w+1:T}^c\big)$
		\STATE \quad \quad $\epsilon_{T-w+1:T}^o, \epsilon_{T-w+1:T}^h, \epsilon_{T-w+1:T}^l, \epsilon_{T-w+1:T}^c \leftarrow$
		\STATE \quad \quad \quad $RVN\Big( S_{T-w+1:T}^o, S_{T-w+1:T}^h, S_{T-w+1:T}^l, S_{T-w+1:T}^c \, ; \, \widehat{\mu - \frac{1}{2}\sigma^2}, \hat{\sigma} \Big)$
		\STATE \quad \quad $\hat{\epsilon}_{T+1}^c \leftarrow f_{\mathbf{\theta}}(\epsilon_{T-w+1:T}^o,\epsilon_{T-w+1:T}^h,\epsilon_{T-w+1:T}^l,\epsilon_{T-w+1:T}^c)$
		\STATE \quad \quad $\hat{S}_{T+1}^c \leftarrow S_{T}^c e^{\left( \widehat{\mu - \frac{1}{2} \sigma^2} \right) + \hat{\sigma}\hat{\epsilon}_{T+1}^c}$
		\STATE \quad \textbf{end for}
		\STATE \quad Update ${\mathbf{\theta},\mathbf{\phi}}$ by minimizing $\mathbb{E}\big[\left(\frac{\hat{S}_{T+1}^c}{S_{T}^c} - \frac{S_{T+1}^c}{S_{T}^c}\right)^2$
		\STATE \quad \quad $+ \beta \left(\frac{1}{w} \sum_{t=T-w+1}^{T} \log \frac{S_{t}^c}{S_{t-1}^c} - \widehat{\mu - \frac{1}{2} \sigma^2}\right)^2 \big]$ using Adam~\cite{kingma2014adam}
		\STATE \textbf{end for}
	\end{algorithmic}	
\end{algorithm} 
\section{Experiments} \label{experiments}

\begin{table}
	\centering
	\caption{
		Summary of datasets.
	}
	\begin{threeparttable}
		\begin{tabular}{l|r|r|cccc}
			\toprule
			\textbf{Data} &
			\textbf{Stocks} &
			\textbf{Days} &
			\textbf{Dates} \\
			\midrule
			U.S.\textsuperscript{1} &
			178 &
			5035 &
			2003-01-02 to 2022-12-30 \\
			China\textsuperscript{1} &
			171 &
			2430 &
			2011-01-06 to 2020-12-31 \\
			U.K.\textsuperscript{1} &
			21 &
			2383 &
			2014-01-06 to 2024-01-10 \\
			Korea\textsuperscript{1} &
			220 &
			2170 &
			2014-01-02 to 2022-11-01 \\
			\bottomrule
		\end{tabular}
		\begin{tablenotes} \footnotesize
			\item[1]\url{https://github.com/AnonymousCoder2357/ReVol}
		\end{tablenotes}
	\end{threeparttable}
	\label{data}
\end{table}

\begin{table}
	\centering
	\caption{
		An input feature vector $\mathbf{x}_t$ at day $t$.
	}
	\begin{threeparttable}
		\begin{tabular}{l|l}
			\toprule
			\textbf{Feature} &
			\textbf{Calculation} \\
			\midrule
			$x_t^{o}$ &
			$x_t^{o}=S_t^o/S_t^c-1$ \\
			$x_t^{h}$ &
			$x_t^{h}=S_t^h/S_t^c-1$ \\
			$x_t^{l}$ &
			$x_t^{l}=S_t^l/S_t^c-1$ \\
			$x_t^{c}$ &
			$x_t^{c}=S_t^c/S_{t-1}^c-1$ \\
			\bottomrule
		\end{tabular}
	\end{threeparttable}
	\label{normalization}
\end{table}

\begin{table*}[!t]
	\centering
	\setlength{\tabcolsep}{4pt}
	\caption{
		Stock prediction performance of \method and baselines measured by Information Coefficient (IC), Rank Information Coefficient (RIC), and Sharpe Ratio (SR). \method consistently improves the performance of backbone models in most cases, showing an average improvement of over 0.03 in IC and over 0.7 in SR.
	}
	\begin{tabular}{l|rrr|rrr|rrr|rrr}
		\toprule
		\multirow{2}{*}{Method}
		& \multicolumn{3}{c}{U.S.}
		& \multicolumn{3}{c}{China}
		& \multicolumn{3}{c}{U.K.}
		& \multicolumn{3}{c}{Korea} \\
		\cmidrule(lr){2-4} \cmidrule(lr){5-7} \cmidrule(lr){8-10} \cmidrule(lr){11-13}
		& IC & RIC & SR
		& IC & RIC & SR
		& IC & RIC & SR
		& IC & RIC & SR \\
		\midrule
		LSTM \cite{nelson2017stock}
		& 0.019 & \textbf{0.026} & 0.642
		& 0.015 & 0.004 & 1.283
		& 0.040 & 0.014 & 0.820
		& 0.043 & 0.029 & 2.092 \\
		\textbf{+ \method}
		& \textbf{0.022} & 0.021 & \textbf{1.001}
		& \textbf{0.042} & \textbf{0.030} & \textbf{2.628}
		& \textbf{0.108} & \textbf{0.056} & \textbf{1.035}
		& \textbf{0.044} & \textbf{0.045} & \textbf{2.338} \\
		\midrule
		GRU \cite{cho2014learning}
		& 0.020 & \textbf{0.027} & 0.656
		& 0.015 & 0.008 & 1.181
		& 0.037 & 0.020 & 0.382
		& 0.043 & 0.023 & 2.008 \\
		\textbf{+ ReVol}
		& \textbf{0.022} & 0.022 & \textbf{0.901}
		& \textbf{0.032} & \textbf{0.024} & \textbf{2.090}
		& \textbf{0.092} & \textbf{0.058} & \textbf{1.289}
		& \textbf{0.044} & \textbf{0.046} & \textbf{2.348} \\
		\midrule
		ALSTM \cite{feng2019enhancing}
		& 0.017 & \textbf{0.022} & 0.796
		& 0.022 & 0.014 & 1.281
		& 0.062 & 0.028 & 1.006
		& 0.041 & 0.028 & 1.961 \\
		\textbf{+ \method}
		& \textbf{0.022} & 0.021 & \textbf{0.943}
		& \textbf{0.039} & \textbf{0.026} & \textbf{2.436}
		& \textbf{0.112} & \textbf{0.059} & \textbf{1.358}
		& \textbf{0.047} & \textbf{0.049} & \textbf{2.728} \\
		\midrule
		Vanilla Transformer \cite{vaswani2017attention}
		& 0.015 & \textbf{0.022} & 0.768
		& 0.011 & 0.005 & 1.110
		& 0.046 & 0.026 & 0.804
		& 0.033 & 0.014 & 1.306 \\
		\textbf{+ ReVol}
		& \textbf{0.020} & 0.019 & \textbf{0.865}
		& \textbf{0.032} & \textbf{0.017} & \textbf{2.608}
		& \textbf{0.127} & \textbf{0.062} & \textbf{1.594}
		& \textbf{0.039} & \textbf{0.046} & \textbf{1.315} \\
		\midrule
		DTML \cite{yoo2021accurate}
		& 0.016 & \textbf{0.017} & 0.834
		& 0.013 & 0.010 & 1.109
		& -0.004 & -0.004 & 0.115
		& 0.006 & -0.042 & 0.106 \\
		\textbf{+ \method}
		& \textbf{0.018} & 0.016 & \textbf{0.886}
		& \textbf{0.048} & \textbf{0.043} & \textbf{2.454}
		& \textbf{0.095} & \textbf{0.046} & \textbf{0.879}
		& \textbf{0.035} & \textbf{0.044} & \textbf{1.674} \\
		\midrule
		MASTER \cite{li2024master}
		& 0.013 & 0.014 & 0.753
		& 0.002 & 0.014 & 0.542
		& 0.019 & 0.001 & 0.345
		& 0.005 & 0.015 & 0.359 \\
		\textbf{+ \method}
		& \textbf{0.015} & \textbf{0.015} & \textbf{0.794}
		& \textbf{0.035} & \textbf{0.019} & \textbf{2.248}
		& \textbf{0.102} & \textbf{0.050} & \textbf{1.213}
		& \textbf{0.038} & \textbf{0.042} & \textbf{1.688} \\
		\bottomrule
	\end{tabular}
	\label{table:perf}
\end{table*}

\begin{table*}[!t]
	\centering
	\setlength{\tabcolsep}{4pt}
	\caption{
		Performance comparison on Information Coefficient (IC), Rank Information Coefficient (RIC), and Sharpe Ratio (SR) with the state-of-the-art normalization methods. \method outperforms other methods, demonstrating better capability to reduce distributional discrepancy in stock price data.
	}
	\begin{tabular}{l|rrr|rrr|rrr|rrr}
		\toprule
		\multirow{2}{*}{Method}
		& \multicolumn{3}{c}{U.S.}
		& \multicolumn{3}{c}{China}
		& \multicolumn{3}{c}{U.K.}
		& \multicolumn{3}{c}{Korea} \\
		\cmidrule(lr){2-4} \cmidrule(lr){5-7} \cmidrule(lr){8-10} \cmidrule(lr){11-13}
		& IC & RIC & SR
		& IC & RIC & SR
		& IC & RIC & SR
		& IC & RIC & SR \\
		\midrule
		RevIN~\cite{kim2021reversible}
		& 0.010 & 0.011 & 0.749
		& 0.030 & 0.024 & 2.096
		& 0.013 & 0.015 & 0.283
		& 0.023 & 0.021 & 1.511 \\
		Dish-TS~\cite{fan2023dish}
		& 0.012 & \textbf{0.020} & 0.518
		& 0.020 & 0.032 & 1.477
		& 0.015 & 0.019 & 0.250
		& 0.015 & 0.002 & 1.296 \\
		\midrule
		\textbf{\method} (proposed)
		& \textbf{0.018} & 0.016 & \textbf{0.886}
		& \textbf{0.048} & \textbf{0.043} & \textbf{2.454}
		& \textbf{0.095} & \textbf{0.046} & \textbf{0.879}
		& \textbf{0.035} & \textbf{0.044} & \textbf{1.674} \\
		\bottomrule
	\end{tabular}
	\label{table:dishts}
\end{table*}

We present experimental results to answer the following research questions about \method:

\begin{itemize*}
	\item[Q1.] \textbf{Performance (Section~\ref{perfromance}).}
		Does \method outperform previous methods in stock price prediction?
	\item[Q2.] \textbf{Comparison with other normalization methods (Section~\ref{sec:dishts}).}
		How well does \method address distribution shifts in stock price data compared to other normalization methods?
	\item[Q3.] \textbf{Hyperparameter sensitivity (Section~\ref{hyperparameter}).}
	How sensitive is \method to hyperparameters?
	\item[Q4.] \textbf{Noise-aware behavior of attention weights (Section~\ref{sec:attention_noise}).} Does the attention mechanism in the RVE module effectively suppress noisy inputs during return-volatility estimation?
	
	\item[Q5.] \textbf{Ablation study (Section~\ref{ablation}).}
	Does each module of \method contribute to improving the performance for stock price prediction?	
\end{itemize*}

\subsection{Experimental Setup}

\textbf{Datasets.} We use stock market datasets collected from the stock markets of the United States, China, United Kingdom, and Korea. The summary of the datasets is presented in Table \ref{data}.
The goal is to predict the daily return $\frac{S_{T+1}^c}{S_{T}^c}$ given the price features up to day $T$. We split the datasets in chronological order into 70\% for training, 10\% for validation, and 20\% for testing.

\textbf{Data normalization.} We generate an input vector $\mathbf{x_t}$ at day $t$ for baselines using widely used price ratio-based normalization method \cite{feng2019enhancing, yoo2021accurate}.
In contrast, \method exploits the normalization technique described in Section \ref{rvn}.
The normalization method for competitors is shown in Table \ref{normalization}.
$x_t^{o}$, $x_t^{h}$, and $x_t^{l}$ indicate the relative values of the opening, highest, and lowest prices in comparison to closing price at day $t$, respectively. $x_t^{c}$ indicates the relative value of the closing price at day $t$ in comparison to closing price at day $t-1$.

\textbf{Baselines.} \method is a model-agnostic method applied to neural network backbones for stock price prediction. We apply \method to the following baselines to validate its performance.

\begin{itemize*}
	\item \textbf{LSTM} \cite{nelson2017stock} is a recurrent neural network designed to capture long-term dependencies in time series data.
	
	\item \textbf{GRU} \cite{cho2014learning} is a simplified variant of LSTM that uses gating mechanisms to effectively capture sequential dependencies in time series data with fewer parameters.
	
	\item \textbf{ALSTM} \cite{feng2019enhancing} enhances the LSTM model by incorporating an attention mechanism, allowing it to focus on the most relevant time steps for improved stock price prediction.
	
	\item \textbf{Vanilla Transformer} \cite{vaswani2017attention} is a self-attention-based architecture originally proposed for machine translation, which can model long-range dependencies in sequences without relying on recurrence.
	
	\item \textbf{DTML} \cite{yoo2021accurate} leverages a transformer structure to capture temporal and inter-stock correlations with global market context for accurate prediction.
	
	\item \textbf{MASTER} \cite{li2024master} is a transformer-based approach for stock price forecasting that effectively captures momentary and cross-time stock correlations through two-step self-attentions in temporal and stock dimensions, respectively.
\end{itemize*}

\textbf{Evaluation metrics.} We conduct 10 experiments with different random seeds ranging from 0 to 9, and report the average across all cases. We evaluate the results of stock price prediction using Information Coefficient (IC) and Rank Information Coefficient (RIC).
IC and RIC are the dominant metrics in finance, which measure the linear relationships between the predicted values and the actual values with Pearson correlation coefficient and Spearman correlation coefficient, respectively.

We evaluate the portfolio's performance using annualized Sharpe Ratio (SR). The portfolio is rebalanced at the close of each day to allocate a weight of 0.2 to the top 5 stocks with the highest predicted returns.

\textbf{Hyperparameters. } We optimize the network parameters using the Adam~\cite{kingma2014adam} optimizer and apply early stopping based on the IC of the validation set.
We conduct a hyperparameter search on learning rate, weight decay, the number of attention heads (for transformer-based models), window size, hidden layer size, and guidance loss weight in $\{0.00001, 0.0001, 0.001\}$, $\{0, 0.0001, 0.001, 0.01\}$, $\{1, 2, 4, 8\}$, $\{8, 16, 24, 32, 40\}$, $\{128, 256, 512\}$, and $\{0, 0.25, 0.5, 1\}$, respectively.

\subsection{Performance (Q1)} \label{perfromance}
Table~\ref{table:perf} presents the performance comparison of \method and the baseline models for stock price prediction across four markets (U.S., China, U.K., and Korea). \method consistently improves performance across all backbone models and metrics, demonstrating its effectiveness in addressing distribution shifts.
On average, \method improves IC by over 0.03 and SR by more than 0.7 compared to the baselines, confirming its ability to enhance predictive accuracy and profitability.

Notably, \method improves model stability and generalization across diverse market conditions, achieving consistent gains across multiple backbones.
This robustness to distributional variations makes \method a practical and scalable solution for real-world financial forecasting tasks.

\subsection{Comparison with Other Normalization Methods (Q2)} \label{sec:dishts}
Table~\ref{table:dishts} compares the performance of \method with two state-of-the-art normalization methods, RevIN~\cite{kim2021reversible} and Dish-TS~\cite{fan2023dish}. RevIN, Dish-TS and \method utilize DTML as the backbone model.

\method shows consistent improvement in IC and SR across all markets, outperforming the other normalization methods.
As illustrated in Figure~\ref{figure:dist}, \method accurately aligns the distribution shapes between the training and test sets while the other method matches only distribution parameters. This precise alignment reduces distributional discrepancies that degrade predictive accuracy.
Additionally, unlike other methods which normalize features independently and lose inter-feature correlations, \method preserves these relationships.
This enables more accurate and stable predictions.

\begin{figure}
	\centering
	\begin{minipage}{0.37\linewidth}
		\centering
		\includegraphics[width=0.99\linewidth, trim=0 0 1mm 1mm, clip]{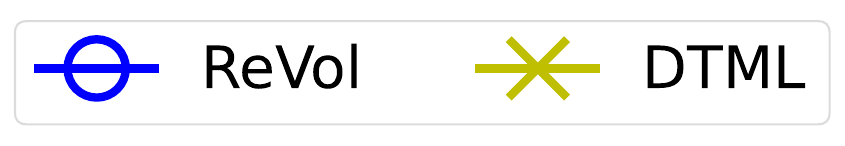}
	\end{minipage}
	\begin{minipage}{0.98\linewidth}
		\centering
		\begin{subfigure}{0.49\textwidth}
			\includegraphics[width=\textwidth]{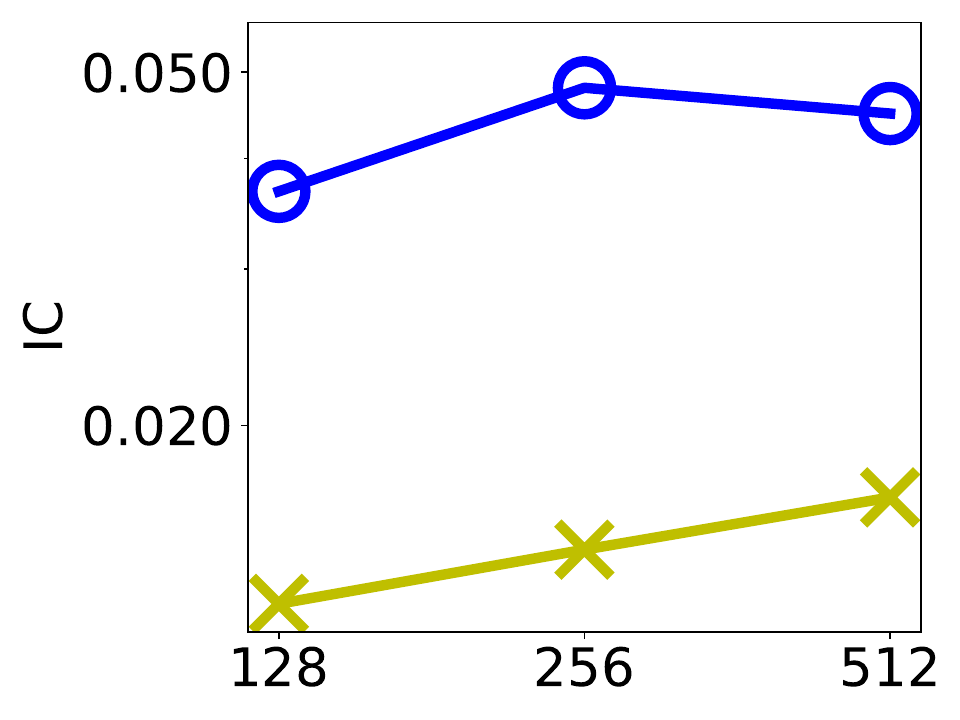}
			\caption{Hidden layer size}
		\end{subfigure} \hfill
		\begin{subfigure}{0.49\textwidth}
			\includegraphics[width=\textwidth]{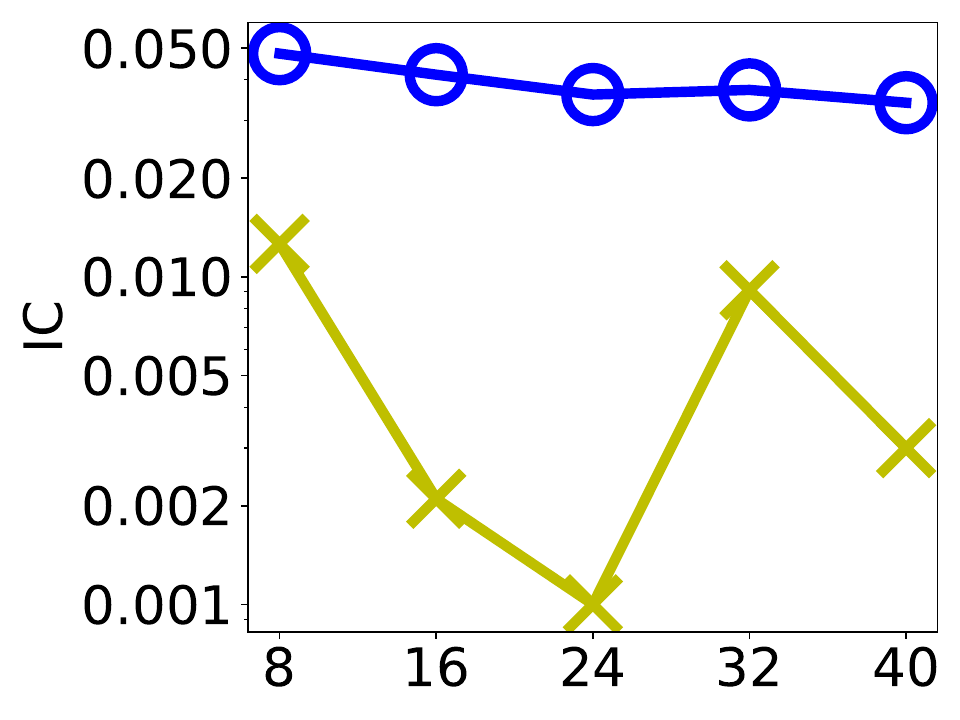}
			\caption{Window size}
		\end{subfigure} \hfill
		\begin{subfigure}{0.49\textwidth}
			\includegraphics[width=\textwidth]{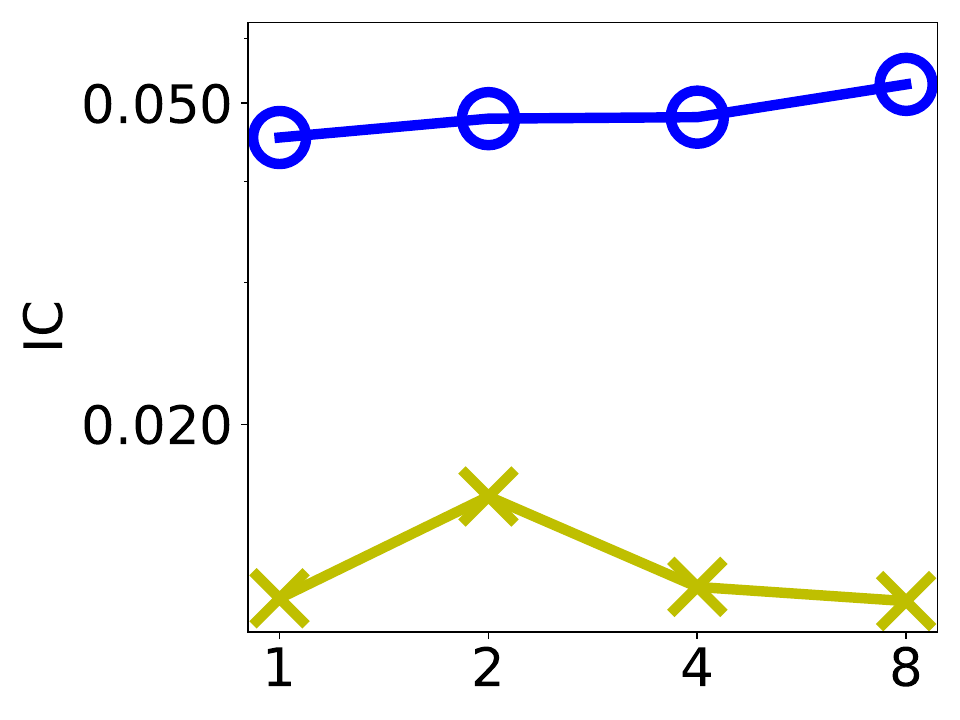}
			\caption{Attention heads}
		\end{subfigure} \hfill
		\begin{subfigure}{0.49\textwidth}
			\includegraphics[width=\textwidth]{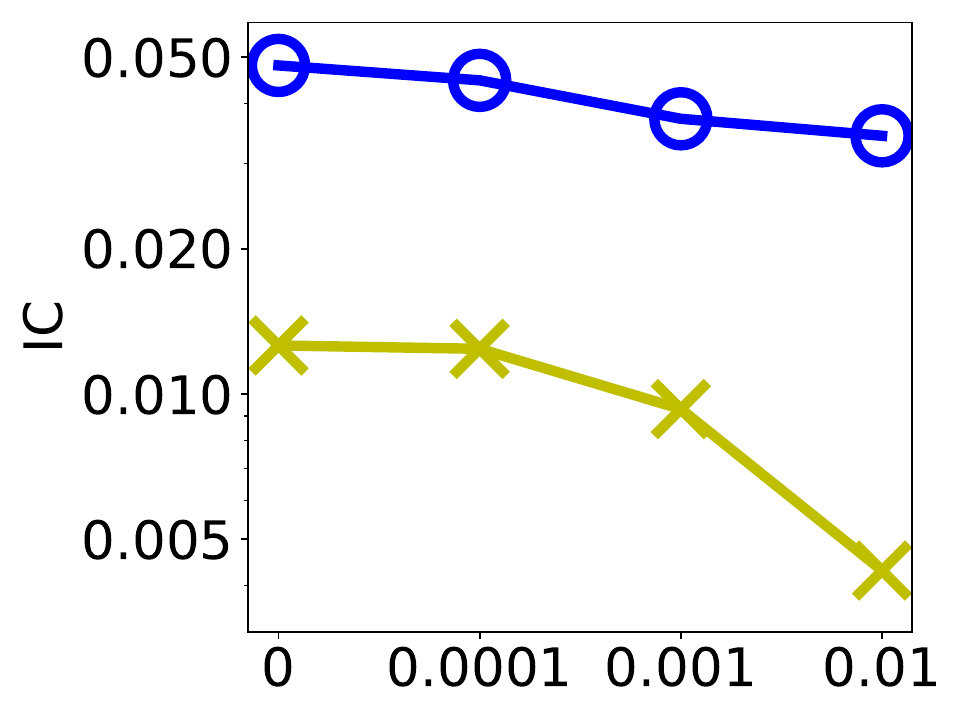}
			\caption{Weight decay}
		\end{subfigure}
	\end{minipage}
	\caption{Comparison of hyperparameter sensitivity between \method and DTML.} \label{figure:hyperparameter}
\end{figure}

\subsection{Hyperparameter Sensitivity (Q3)} \label{hyperparameter}
Figure 4 compares the hyperparameter sensitivity of \method with DTML as the backbone to standalone DTML across four key hyperparameters: (a) hidden layer size, (b) window size, (c) number of attention heads, and (d) weight decay. DTML shows significant sensitivity to these hyperparameters, resulting in unstable performance and fluctuations in IC, especially for window size and weight decay settings.

In contrast, \method consistently maintains high IC values across all hyperparameter settings. This stability is attributed to \method’s ability to mitigate distribution shifts, reducing overfitting and improving generalization.
By aligning distributions across samples, \method decreases the burden on DTML, making the model more robust and less dependent on precise hyperparameter tuning. Consequently, \method achieves more reliable and consistent performance under various training conditions.

\begin{table}
	\centering
	\caption{Empirical evidence that the attention mechanism down-weights noisy observations: (left) Pearson correlation coefficient between $\left|\frac{S_t^c}{S_{t-1}^c}\right|$ and attention weight $\alpha_t$, (right) ratio $\frac{\mathbb{E}\left[\alpha_t \,\middle|\, \left\lvert \frac{S_t^c}{S_{t-1}^c} \right\rvert \ge 0.1\right]}{\mathbb{E}\left[\alpha_t \,\middle|\, \left\lvert \frac{S_t^c}{S_{t-1}^c} \right\rvert < 0.1\right]}$ of attention weights for noisy inputs to normal inputs.}
	\label{tab:attention_noise_summary}
	\begin{tabular}{l|r|r}
		\toprule
		\textbf{Data} & \textbf{Correlation} & \textbf{Attention Ratio} \\
		\midrule
		U.S.    & –0.179 & 0.788 \\
		China   & –0.498 & 0.940 \\
		U.K.    & –0.056 & 0.966 \\
		Korea   & –0.424 & 0.483 \\
		\bottomrule
	\end{tabular}
\end{table}

\subsection{Noise-Aware Behavior of Attention Weights (Q4)} \label{sec:attention_noise}

Table~\ref{tab:attention_noise_summary} shows that \method consistently assigns lower attention weights to time steps with high return magnitudes, indicating its overall tendency to suppress noisy inputs.
Here, noisy inputs refer to observations in which stock prices deviate temporarily from their true distribution due to external shocks, such as unexpected news events or large-scale liquidations.
These results suggest that our attention mechanism effectively identifies and down-weights noisy observations that degrade the estimation of return and volatility.

We quantify this behavior using two complementary analyses.
First, the Pearson correlation coefficient between the absolute return and attention weight is negative across all markets, showing that higher-volatility inputs receive lower attention weights.
Second, the attention ratio between noisy and less noisy time steps is consistently below 1, confirming that noisy observations are down-weighted.

This behavior is attributed to the optimization dynamics of \method.
During training, the attention weights are adjusted to minimize the final loss.
If noisy time steps receive high attention weights, the estimated return and volatility become less accurate, leading to larger prediction errors.
Through backpropagation, the model receives gradients that penalize high attention weights assigned to noisy inputs since these inputs increase prediction error.
Hence, the attention mechanism implicitly filters out noise by learning to trust stable time steps more.

\begin{table}
	\centering
	\caption{
		An ablation study of \method on China dataset.
	}
	\begin{threeparttable}
		\begin{tabular}{l|r|r|r}
			\toprule
			Method &
			IC &  RIC & SR \\
			\midrule
			\method-N &
			0.002  &  -0.011 & 0.158 \\
			\method-E &
			0.045 &  0.040 & 0.595 \\
			\method-D &
			0.035 &  0.020 & 0.168 \\
			\method &
			\textbf{0.048} & \textbf{0.043} & \textbf{2.454} \\
			\bottomrule
		\end{tabular}
	\end{threeparttable}
	\label{table:ablation}
\end{table}

\subsection{Ablation Study (Q5)} \label{ablation}
We analyze the performance of \method with DTML as the backbone, and its variants by removing each of the three primary modules, as shown in Table~\ref{table:ablation}:

\begin{itemize*}
	\item \method-N: \method without the return-volatility normalization
	\item \method-E: \method without the return-volatility estimator
	\item \method-D: \method without the return-volatility denormalization
\end{itemize*}

Each module contributes to improved prediction performance, with \method incorporating the three modules achieving the highest results.
\RVN module contributes the most to the overall performance as it directly addresses the fundamental issue of distribution shift in stock price data.
\RVN effectively reduces distribution discrepancies across samples by normalizing the input data with sample characteristics.
This enables the model to focus on learning normalized patterns from historical error terms. 
This process stabilizes the input distribution, facilitating more robust and efficient feature learning.

\section{Conclusion} \label{conclusion}

We propose \method, an accurate prediction method for alleviating distribution shift in stock price data.
The main ideas of \method are:
1) normalizing price features to address distribution shifts by eliminating individual sample characteristics such as return, volatility, and price scale,
2) precisely estimating these characteristics through an attention-based module to minimize the influence of market anomalies, and
3) reintegrating the sample characteristics into the prediction process to restore lost information.
Additionally, \method leverages geometric Brownian motion to capture long-term trends and neural networks for short-term patterns, effectively integrating their strengths for stock price modeling.
\method boosts the performance of state-of-the-art backbone models in most scenarios, demonstrating notable improvements in predictive accuracy and stability through the proposed normalization approach. 

\section*{GenAI Usage Disclosure} \label{genai}
We have used a generative AI tool (ChatGPT by OpenAI) solely to check for grammatical errors and enhance the clarity of the manuscript during the writing stage.
No part of the research design, experiments, data analysis, or results interpretation was conducted using generative AI.
All technical contributions are the authors’ own work.

\section*{Acknowledgements}
This work was supported by the National Research Foundation of Korea(NRF) funded by MSIT(2022R1A2C3007921).
This work was also supported by Institute of Information \& communications Technology Planning \& Evaluation(IITP) grant funded by the Korea government(MSIT) [No.RS-2021-II211343, Artificial Intelligence Graduate School Program(Seoul National University)], and [NO.RS-2021-II212068, Artificial Intelligence Innovation Hub (Artificial Intelligence Institute, Seoul National University)].
The Institute of Engineering Research at Seoul National University provided research facilities for this work.
The ICT at Seoul National University provides research facilities for this study. U Kang is the corresponding author.
\bibliographystyle{ACM-Reference-Format}
\bibliography{paper}

\end{document}